\newcommand{\no}[1]{}
\newcommand{\zero}{\texttt{0}}
\newcommand{\one}{\texttt{1}}
\newcommand{\two}{\texttt{2}}
\newcommand{\f}{\mathcal{F}}
\newcommand{\syma}{\texttt{a}}
\newcommand{\symb}{\texttt{b}}
\newcommand{\symc}{\texttt{c}}
\newcommand{\symd}{\texttt{d}}
\begin{document}

\title{L-systems for Measuring Repetitiveness\thanks{Funded in part by Basal Funds FB0001, Fondecyt Grant 1-200038, and a Conicyt Doctoral Scholarship, ANID, Chile.}}

\author{Gonzalo Navarro \and
    Cristian Urbina}
\authorrunning{Navarro and Urbina}

\institute{CeBiB --- Center for Biotechnology and Bioengineering \\ Departament of Computer Science, University of Chile}

\maketitle

\begin{abstract}
    An L-system (for lossless compression) is a CPD0L-system extended with two parameters $d$ and $n$, which determines unambiguously a string $w = \tau(\varphi^d(s))[1:n]$, where $\varphi$ is the morphism of the system, $s$ is its axiom, and $\tau$ is its coding. The length of the shortest description of an L-system generating $w$ is known as $\ell$, and is arguably a relevant measure of repetitiveness that builds on the self-similarities that arise in the sequence.
    
    In this paper we deepen the study of the measure $\ell$ and its relation with $\delta$, a better established lower bound that builds on substring complexity. Our results show that $\ell$ and $\delta$ are largely orthogonal, in the sense that one can be much larger than the other depending on the case. This suggests that both sources of repetitiveness are mostly unrelated. We also show that the recently introduced NU-systems, which combine the capabilities of L-systems with bidirectional macro-schemes, can be asymptotically strictly smaller than both mechanisms, which makes the size $\nu$ of the smallest NU-system the unique smallest reachable repetitiveness measure to date.
\keywords{L-systems \and Repetitiveness measures \and Text compression}
\end{abstract}

\section{Introduction}

In areas like Bioinformatics, it is often necessary to handle big collections of highly repetitive data. For example, two human genomes share $99.9\%$ of their content \cite{Przeworski2000}. In another scenario, for sequencing a genome one extracts so-called {\em reads} (short substrings) from it, with a ``coverage'' of up to 100X, which means that each position appears on average in 100 reads.\footnote{{\tt https://www.illumina.com/science/technology/next-generation-sequencing/ plan-experiments/coverage.html}} There is a need in science and industry to maintain those huge string collections in compressed form. Traditional compressors based exclusively on { Shannon's entropy} are not good for handling repetitive data, as they only exploit symbol frequencies when compressing. Finding good measures of repetitiveness, and also compressors exploiting this repetitiveness, has then become a relevant research problem.

A strong theoretical measure of string repetitiveness introduced by Kociumaka et al.\ is $\delta$ \cite{KNP20}, based on the substring complexity function. This measure has several nice properties: it is computable in linear time, monotone, resistant to string edits, insensitive to simple string transformations, and it lower-bounds almost every other theoretical or {\em ad-hoc} repetitiveness measure considered in the literature. Further, although $\delta$ is unreachable, there exist $O(\delta \log \frac{n}{\delta})$-space representations supporting efficient pattern matching queries, and this space is tight: no $o(\delta\log\frac{n}{\delta})$-space representation can exist \cite{KNP20}.

The idea that $\delta$ is a sound lower bound for the repetitiveness is reinforced by the fact that it is always $O(b)$, where $b$ is the size of the smallest {\em bidirectional macro-scheme} generating a string $w$ \cite{Storer1982}. Those macro-schemes arguably capture any possible way of exploiting copy-paste regularities in the sequences. Some very recent schemes \cite{NU2021}, however, explore other sources of repetitiveness, in particular {\em self-similarity}, and are shown to break the lower bound of $\delta$. 

The simplest of those schemes, called {\em L-systems} \cite{NU2021}, builds upon Lindenmayer systems \cite{Lindenmayer1968a,Lindenmayer1968b}, in particular on the variant called CPD0L-systems. A CPD0L-system describes the language of the images under a coding $\tau$, of the powers of a morphism $\varphi$ starting from an string $s$ (called the {\em axiom}), that is, the set $\{ \tau(\varphi^i(s))\,|\, i \ge 0 \}$. The L-system extends a CPD0L-system with two parameters $d$ and $n$, so as to unambiguously determine the string $w = \tau(\varphi^d(s))[1:n]$. The size of the shortest description of an L-system generating $w$ in this fashion is called $\ell$. Intuitively, $\ell$ works as a repetitiveness measure because any occurrence of the symbol $a$ at level $i$ expands to the same string at level $i+j$ for any $j$. 

Since $\ell$ is a reachable measure of repetitiveness (because the L-system is a representation of $w$ of size $\ell$), there are string families where $\delta = o(\ell)$. Intriguingly, it is shown \cite{NU2021} that there are other string families where $\ell = o(\delta)$, so (1) both measures are not comparable and (2) the lower bound $\delta$ does not capture this kind of repetitiveness. On the other hand, it is shown that $\ell = O(g)$, where $g$ is the size of the smallest deterministic context-free grammar generating only $w$. This comparison is relevant because L-systems are similar to grammars, differing in that they have no terminal symbols, so their expansion must be explicitly stopped at level $d$ and then possibly converted to terminals with $\tau$. 
Grammars provide an upper bound to repetitiveness that is associated with well-known compressors, so this upper bound makes $\ell$ a good measure of repetitiveness. 

A more complex scheme that was also introduced  \cite{NU2021} are NU-systems, which combine the power of L-systems with bidirectional macro-schemes. The measure $\nu$, defined as the size of the smallest NU-system generating $w$, naturally lower bounds both $\ell$ and $b$. The authors could not, however, find string families where $\nu$ is asymptotically better than both $\ell$ and $b$, so it was unclear if NU-systems are actually better than just the union of both underlying schemes.

In this paper we deepen the study of the relations between these new intriguing measures and more established ones like $\delta$ and $g$. Our results are as follows: 
\begin{enumerate}
\item  We show that $\ell$ can be much smaller than $\delta$, by up to a $\sqrt{n}$ factor, improving a previous result \cite{NU2021} and refuting their conjecture that $\ell = \Omega(\delta /\log n)$.
\item On the other hand, we expose string families where $\ell$ is larger than the output of several repetitiveness-aware compressors like the size $g_{rl}$ of the smallest run-length context-free grammar, the size $z_e$ of the smallest LZ-End parse \cite{Kreft2010}, and the number of runs $r$ in the Burrows-Wheeler Transform of the string \cite{BW94}. We then conclude that $\ell$ is incomparable to almost all measures other than $g$, which suggests that the source of repetitiveness it captures is largely orthogonal to the typical cut-and-paste of macro-schemes.
\item We introduce a string family where $\nu$ is asymptotically strictly smaller than both $\ell$ and $b$, which shows that NU-systems are indeed relevant and positions $\nu$ as the unique smallest reachable repetitiveness measure to date, capturing both kinds of repetitiveness in non-trivial ways.
\item We study various ways of simplifying L-systems, and show in all cases we end up with a weaker repetitiveness measure. We also study some of those weaker variants of $\ell$, which can be of independent interest.
\end{enumerate}

Overall, our results contribute to understanding how to measure repetitiveness and how to exploit it in order to build better compressors.

\section{Basic concepts}

In this section we explain the basic concepts needed to understand the rest of the paper, from strings and morphisms to relevant repetitiveness measures.

\subsection{Strings}
An {\em alphabet} is a finite set of {\em symbols}, and is usually denoted by $\Sigma$. A string $w$ is a sequence of symbols in $\Sigma$, and its length is denoted $|w|$. The {\em empty string}, whose length is $0$, is denoted by $\varepsilon$. The set of all possible finite strings over $\Sigma$ is denoted by $\Sigma^*$. The $i$-th symbol of $w$ is denoted by $w[i]$, if $1 \leq i \leq |w|$. The notation $w[i:j]$ stands for the subsequence $w[i]\dots w[j]$, if $1 \leq i \leq j \leq |w|$, or $\varepsilon$ otherwise. Other convenient notations are $w[:i] = w[1:i]$, and $w[j:] = w[j:|w|]$. If $x=x[1]\dots x[n]$ and $y=y[1] \dots y[m]$, the concatenation operation $x \cdot y$ (or just $xy$) stands for $x[1]\dots x[n]y[1]\dots y[m]$. Let $w = xyz$. Then $y$ is a {\em substring} (resp. $x$, $z$) of $w$ (resp. {\em prefix}, {\em suffix}), and it is {\em proper} if it is not equal to $w$.

A (right) {\em infinite string} \textbf{w} (we use boldface for them) over an alphabet $\Sigma$ is a mapping from $\mathbb{Z}^+$ to $\Sigma$, and its length is $\omega$, which is greater than any $n \in \mathbb{Z}^+$. The notations $w[i]$ and $w[i: j]$ carry over to infinite strings. It is possible to define the concatenation $x\cdot\textbf{y}$ if $x$ is finite and $\textbf{y}$ infinite. The concepts of substring, prefix and suffix also carry over, with non-trivial prefixes always being finite, and suffixes always being infinite. 

\subsection{Morphisms}

The set $\Sigma^*$ together with the (associative) concatenation operator and the (identity) string $\varepsilon$ form a {\em monoid} structure $(\Sigma^*, \cdot, \varepsilon)$. A {\em morphism} on strings is a function $\varphi: \Sigma_1^* \rightarrow \Sigma_2^*$ satisfying that $\varphi(\varepsilon)=\varepsilon$ and $\varphi(x \cdot y) = \varphi(x) \cdot \varphi(y)$ (i.e., a function preserving the monoid structure), where $\Sigma_1$ and $\Sigma_2$ are alphabets. To define a morphism of strings, it is sufficient to define how it acts over the symbols in its domain, which are called its {\em rules}, and there are $|\Sigma_1|$ of them. If $\Sigma_1 = \Sigma_2$, then the morphism is called an {\em endomorphism}.

Let $\varphi: \Sigma_1^* \rightarrow \Sigma_2^*$ be a morphism on strings. Some useful definitions are $width(\varphi) = max_{a \in \Sigma_1}|\varphi(a)|$ and $size(\varphi) = \sum_{a \in \Sigma_1}{|\varphi(a)|}$. A morphism is {\em non-erasing} if $\forall a \in \Sigma_1, |\varphi(a)| > 0$, {\em expanding} if $\forall a \in \Sigma_1, |\varphi(a)| > 1$, $k$-{\em uniform} if $\forall a \in \Sigma_1, |\varphi(a)| = k > 2$, and a {\em coding} if $\forall a \in \Sigma_1, |\varphi(a)| = 1$ (sometimes called a $1$-uniform morphism).

Let $\varphi: \Sigma^* \rightarrow \Sigma^*$ be an endomorphism. Then $\varphi$ is {\em prolongable} on a symbol $a$ if $\varphi(a) = ax$ for some string $x$, and $\varphi^i(x) \neq \varepsilon$ for every $i$. If this is the case, then for each $i,j$ with $0 \leq i \leq j$, it holds that $\varphi^i(a)$ is a prefix of $\varphi^j(a)$, and $\textbf{x} = \varphi^{\omega}(a) = ax\varphi(x)\varphi^2(x)\dots$ is the unique infinite fixed-point of $\varphi$ starting with the symbol $a$. An infinite string $\textbf{w} = \varphi^{\omega}(a)$ that is the fixed-point of a morphism is called a {\em purely morphic word}, its image under a coding $\textbf{x} = \tau(\textbf{w})$ is called a {\em morphic word}, and if the morphism $\varphi$ is $k$-uniform, then $\textbf{x}$ is said to be $k$-{\em automatic}. If $\textbf{w}$ is a purely morphic word, fixed-point of a morphism $\psi$, then there exist a coding $\tau$, a non-erasing morphism $\varphi$, and a symbol $a$ such that $\textbf{w} = \tau(\varphi^{\omega}(a))$ \cite{AlloucheShallit2003}. This also implies that we can generate any morphic word by iterating a non-erasing prolongable morphism, and then applying a coding.

\subsection{Repetitiveness measures}

\subsubsection{Grammars.}
A {\em straight line program} (SLP) is a deterministic context free grammar whose language is a singleton $\{w\}$. The measure $g$ is defined as the size of the smallest SLP $G$ generating ${w}$.
Finding the smallest SLP is an NP-complete problem \cite{Charikar2005}, although in practice, there exist algorithms providing log-approximations \cite{Jez2015,Rytter2003}. Another measure based on grammars is $g_{rl}$, the size of the smallest {\em run-length} SLP (RLSLP) generating $w$ \cite{Nishimoto2016}. RLSLPs allow constant-size rules of the form $A \rightarrow a^n$ for $n > 1$, which can make a noticeable difference in some string families like $\{\syma^n\, |\, n \geq 0\}$, where $g = \Theta(\log n)$, but $g_{rl}=O(1)$.

\subsubsection{Parsings.}

A {\em parsing} produces a factorization of a string $w$ into non-empty {\em phrases}, $w = w_1w_2\dots w_k$. Several compressors work by parsing $w$ in a way that storing summary information about the phrases enables recovering $w$. 

The {\em Lempel-Ziv} parsing (LZ) process a string from left to right, always forming the longest phrase that has a copy starting inside some previous phrase \cite{LZ76}. The LZ-no parsing always forms the longest phrase with a copy fully contained in the concatenation of previous phrases. Another variation is the LZ-end parsing, which forms the longest phrase with an occurrence ending in alignment with a previous phrase \cite{Kreft2010}. All of these parsings can be constructed in linear time, and their number of phrases are denoted by $z$, $z_{no}$, and $z_e$, respectively.

A {\em bidirectional macro-scheme} (BMS) \cite{Storer1982} is any parsing where each phrase of length greater than $1$ has a copy starting at a different position, in such a way that the original string can be recovered following these pointers. The measure $b$ is defined as the size of the smallest BMS for $w$. It strictly lower bounds asymptotically all the other reachable repetitiveness measures \cite{NOP20}, except for $\ell$ and $\nu$ \cite{NU2021}. It is NP-hard to compute \cite{Gallant1982}, though.

\subsubsection{Burrows-Wheeler transform.}

The Burrows-Wheeler transform (BWT) \cite{BW1994} is a reversible transformation that usually makes a string more compressible. It is obtained by concatenating the last symbols of the sorted rotations of $w$. The BWT tends to produce long runs of the same symbol when a string is repetitive, and these runs can be compressed into one symbol using {\em run-length encoding} ($rle$). A repetitiveness measure based in this idea is defined as $r(w) =|rle(BWT(w))|$. Although $r$ is not so good as a repetitiveness measure \cite{Giuliani2021}, it has practical applications representing sequences in Bioinformatics \cite{GNP18}.

\subsubsection{Substring complexity.}

Let $F_w(k)$ be the set of substrings of $w$ of length $k$. The {\em complexity function} of $w$ is defined as $P_w(k) = |F_w(k)|$. Kociumaka et al. introduced a repetitiveness measure based on the complexity function, defined as $\delta(w) = max\{P_w(k)/k\,|\,k\in[1..|w|]\}$ \cite{KNP20}.
This measure has several nice properties: it is computable in linear time, monotone, insensitive to reversals, resistant to small edits on $w$, can be used to construct $O(\delta \log \frac{n}{\delta})$-space representations supporting efficient access and pattern matching queries \cite{KNP20}, and is a lower bound to almost every other theoretical or {\em ad-hoc} repetitiveness measure considered in the literature. On the other hand, $o(\delta\log\frac{n}{\delta})$ space has been proved to be unreachable \cite{KNP20}.

\section{The measure $\ell$}

The class of {\em CPD0L-systems} is a variant of the original {\em L-systems}, parallel grammars without terminals, defined by Aristid Lindenmayer to model cell divisions in the growth of plants and algaes \cite{Lindenmayer1968a,Lindenmayer1968b}. 

A CPD0L-system is a 4-tuple $L=(\Sigma, \varphi, \tau, s)$, where $\Sigma$ is the {\em alphabet}, $\varphi$ is the set of {\em rules} (an endomorphism on $\Sigma^*$), $\tau$ is a coding on $\Sigma^*$, and $s \in \Sigma^+$ is the {\em axiom}. The system generates the language $\{\tau(\varphi^d(w))\, |\, d \geq 0\}$.  The ``D0L'' stands for {\em deterministic L-system with 0 interactions}. The ``P'' stands for {\em propagating}, which means that it has no $\varepsilon$-rules. The ``C'' stands for {\em coding}, which means that the system is extended with a coding. 
For a CPD0L-system to be utilizable as a compressor, we extend it to a 6-tuple with two extra parameters, $d$ and $n$, and define the unique string generated by the system as $\tau(\varphi^d(w))[1:n]$. For simplicity, in the rest of this paper, we refer to these extended CPD0L-systems as L-systems.

The measure $\ell$ is defined as the size of the smallest L-system generating a string, where the {\em size} of the L-system is $size(\varphi)+|s|+|\Sigma|+2$, accounting for the lengths of the right-hand sides of its rules, the length of the axiom, the function $\tau$, and the values $d$ and $n$. This space is measured in $O(\log n)$-bit words, so we always assume that $d = n^{O(1)}$ and that $\Sigma = n^{O(1)}$.
A finer-grained analysis about the number of bits needed to represent an  L-system of size $\ell$ yields $O(\ell \log |\Sigma| + \log n)$ bits, the second term corresponding to $d$ and $n$; note that  $\Sigma$ contains the alphabet of $w$.

An important result about $\ell$ is that it always holds that $\ell = O(g)$ \cite{NU2021}. More importantly, sometimes $\ell = o(\delta)$, which implies that $\delta$ is not lower bound for $\ell$, and questions $\delta$ as a golden measure of repetitiveness. 

\subsection{Variants}

To understand the particularities of $\ell$, we study several classes of L-systems with different restrictions, and define measures based on them. We define the measure $\ell_e$ (resp., $\ell_u$) that restricts the morphism to be expanding (resp., $k$-uniform). The variant $\ell_m$ forces the morphism to be $a$-prolongable for some symbol $a$, and the axiom to be $s = a$. The variant $\ell_d$ essentially removes the coding. Finally, $\ell_p$ refers to the intersection of $\ell_m$ and $\ell_d$, and $\ell_a$ refers to the intersection of $\ell_m$ and $\ell_u$.

\begin{definition}
    An L-system $(\Sigma, \varphi, \tau, s)$ is {\em $a$-prolongable} if there exists a symbol $a$ such that $s = a$ and $a \rightarrow ax$ with $x \neq \varepsilon$. An L-system is {\em prolongable} if it is $a$-prolongable for some symbol $a$.
\end{definition}

\begin{definition}The $\ell$-variants studied in this paper are the following:
\begin{itemize}
\item The measure $\ell$ denotes the size of the smallest L-system generating $w$.
\item The variant $\ell_e$ denotes the size of the smallest L-system generating $w$, satisfying that all its rules have size at least 2.
\item The variant $\ell_m$ denotes the size of the smallest prolongable L-system generating $w$.
\item The variant $\ell_d$ denotes the size of the smallest L-system generating $w$, satisfying that $\tau$ is the identity function.
\item The variant $\ell_u$ denotes the size of the smallest L-system generating $w$, satisfying that all its rules have the same size, at least 2.
\item The variant $\ell_p$ denotes the size of the smallest prolongable L-system generating $w$, satisfying that $\tau$ is the identity function.
\item The variant $\ell_a$ denotes the size of the smallest prolongable L-system generating $w$, satisfying that all its rules have the same size, at least 2.
\end{itemize}
\end{definition}

Our results concerning the proposed $\ell$-variants and other relevant repetitiveness measures across the paper are summarized in Figure~\ref{fig:variants}. 

\begin{figure}[t]\centering
\begin{tikzpicture}
[>=latex, 
node distance=0.5cm and 2.0cm,
rounded corners=0.25cm,
main/.style = {draw, fill=gray!10, minimum size=0.6cm}]
\node[main] (ell) {$\ell$};
\node[main] (elle) [right=of ell] {$\ell_e$};
\node[main] (ellu) [right=of elle]{$\ell_u$};
\node[main] (ella) [right=of ellu]{$\ell_a$};
\node[main] (ellm) [above=of elle] {$\ell_m$};
\node[main] (ellp) [above=of ella] {$\ell_p$};
\node[main] (elld) [above=of ellm]{$\ell_d$};
\node[main] (g) [right=of elld]{$g$};
\node[main] (nu) [left=of ell]{$\nu$};
\node[main] (b) [above=of ell]{$b$};
\node[main] (delta) [above=of nu]{$\delta$};
\coordinate[above=of g] (gspace);
\coordinate[above=of gspace] (gspace2);
\coordinate[right=2.625cm of gspace2] (gspace3);

\draw[->] (nu) to (ell);
\draw[->] (ell) to (elle);
\draw[->] (ell) to (ellm);
\draw[->, dashed] (ell) to (elld);
\draw[->, dashed] (elle) to (ellu);
\draw[->, dashed] (ellu) to (ella);
\draw[->] (ellm) to (ellp);
\draw[->, dashed] (ellm) to (ella);
\draw[->] (elld) to (g);
\draw[->] (elld) to (ellp);
\draw[->, gray] (elld) to (ellm);
\draw[->, gray] (ellu) to (ellm);

\draw[->] (nu) to (b);
\draw[->] (delta) to (b);
\draw[->] (b) |- (gspace) -- (g);

\draw[<->, gray] (delta) |-  (gspace3) -- (ellp);
\draw[<->, gray] (b) to (ell);
\draw[<->, gray] (nu) to (delta);
\draw[<->, gray] (g) to (ellp);

\end{tikzpicture}\caption{Asymptotic relations between $\ell$-variants and other relevant measures. A black arrow (dashed or solid) from $v_1$ to $v_2$ means that it holds that $v_1 = O(v_2)$ for any string family. If the black arrow is solid, then also there exists a string family where $v_1 = o(v_2)$. A gray arrow from $v_1$ to $v_2$ means that we known that there exists a family where $v_1 = o(v_2)$.}\label{fig:variants}
\end{figure}
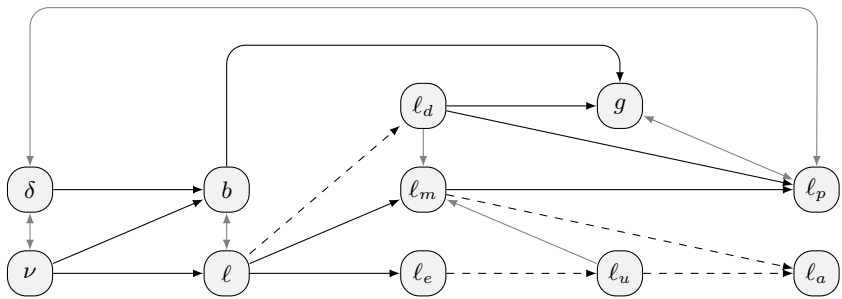

\section{A family where $\ell$ is much better than  $\delta$}

Navarro and Urbina showed a string family satisfying that $\delta = \Omega(\ell \log n)$ \cite{NU2021}, and conjectured that this gap was the maximum possible, that is, that the lower bound $\ell = \Omega(\delta / \log n)$ holds for any string family. We now disprove this conjecture. We show a string family where $\delta$ is $\Theta(\sqrt{n})$ times bigger than the size $\ell$ of the smallest  L-system.

\begin{lemma}\label{thm:ellp<delta}
    There exists a string family where $\delta = \Theta(\ell \sqrt{n})$.
\end{lemma}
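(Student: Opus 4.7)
The plan is to exhibit an explicit constant-size L-system whose iterates produce a string of quadratic factor complexity, so that $\delta$ lands on $\Theta(\sqrt{n})$. Take $\Sigma = \{\syma, \symb, \symc\}$ with rules $\varphi(\syma) = \syma\symb\symc$, $\varphi(\symb) = \symb\symc$, $\varphi(\symc) = \symc$, axiom $s = \syma$, and coding $\tau = \mathrm{id}$; a routine induction yields
\[
\varphi^d(\syma) = \syma \cdot \symb\symc \cdot \symb\symc^2 \cdots \symb\symc^d,
\]
of length $n = 1 + \sum_{i=1}^d (i+1) = \binom{d+2}{2}$. Only the integers $d$ and $n$ depend on the family parameter, so $\ell = O(1)$ and it remains to prove $\delta(\varphi^d(\syma)) = \Theta(\sqrt{n}) = \Theta(d)$.

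For the lower bound I would fix $m = d$ and count length-$m$ factors that contain exactly two $\symb$'s. Any such factor has the shape $\symc^{i_0} \symb \symc^{p} \symb \symc^{i_2}$, where $p$ is the index in $w$ of the leftmost of the two $\symb$'s, with $0 \leq i_0 \leq p - 1$, $0 \leq i_2 \leq p + 1$, and $i_0 + p + i_2 = m - 2$. Every such triple $(p, i_0, i_2)$ does occur as a factor in $w$, and distinct triples give distinct strings because $p$ is read off the middle $\symc$-gap and $i_0$ from the leading $\symc$-run. A standard arithmetic count over $p$ yields $\Theta(m^2)$ valid triples, so $P_w(m) = \Omega(n)$ and $P_w(m)/m = \Omega(\sqrt{n})$. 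For the matching upper bound I split on $m$: for $m \geq \sqrt{n}$ the trivial bound $P_w(m) \leq n - m + 1$ already gives $P_w(m)/m \leq \sqrt{n}$; for $m < \sqrt{n}$ I classify factors by their number $t$ of $\symb$'s. The key observation is that for $t \geq 2$ the intermediate $\symc$-gaps are forced to be $\symc^p, \symc^{p+1}, \ldots, \symc^{p+t-2}$, so a factor is determined by $(p, i_0)$; the length equation
\[
i_0 + i_t + t + (t-1)p + \binom{t-1}{2} = m
\]
then forces $p = O(m/(t-1))$ and $i_0 = O(p)$, yielding $O(m^2/(t-1)^2)$ factors per class. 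Classes $t \leq 1$ contribute $O(m)$ in total. Summing gives $P_w(m) = O(m^2)$, hence $P_w(m)/m = O(m) = O(\sqrt{n})$, and combining both ranges yields $\delta = \Theta(\sqrt{n}) = \Theta(\ell\sqrt{n})$.

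The main technical obstacle will be verifying cleanly that distinct parameter tuples yield distinct factors in the lower-bound count, which rests on the strict increase of the $\symc$-run lengths $1, 2, \ldots, d$ in $w$: the middle $\symc$-gap of any two-$\symb$ factor uniquely identifies $p$, and the leading $\symc$-run then identifies $i_0$. Boundary issues — the initial $\syma$, factors that touch it, and extreme values of $i_0$ or $i_2$ — affect only an additive $O(m)$ term and do not disturb the $\Theta(m^2)$ asymptotics.
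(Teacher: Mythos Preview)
Your construction is correct and is essentially the same as the paper's: both exhibit a constant-size prolongable L-system whose iterates have the shape $x\,y\,z^1\,y\,z^2\cdots y\,z^d$ (up to relabelling of symbols), and both argue $\delta=\Theta(d)=\Theta(\sqrt{n})$ by counting $\Theta(d^2)$ distinct length-$d$ factors. The only real difference is the upper bound: the paper dispatches it in one line via $\delta\le g_{rl}$ and the obvious $O(\sqrt{n})$-size run-length grammar for this string, whereas you do a direct factor-complexity count partitioned by the number of $\symb$'s; your route is more self-contained but longer, the paper's is slicker but imports an external inequality.
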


\begin{proof}
    Consider the \symc-prolongable L-system defined as $$(\{\syma,\symb,\symc\}, \{\syma \rightarrow \syma, \symb \rightarrow \syma\symb,\symc \rightarrow \symc\symb\}, id, \symc, d+1, 1 + \frac{d(d+1)}{2} + d)$$
    for any $d$. Clearly this system generates strings of the form $s_d = \symc\Pi_{i=1}^d\syma^i\symb $ for $d \geq 0$. It holds that $\ell$ is $O(1)$ in this family.
    On the other hand, the first $1+ (d/2)(d/2+1)/2 + d/2$ substrings of length $d$ of $\symc\Pi_{i=1}^d\syma^i\symb $ are completely determined by the $\symb$'s they cross, and the number of $\syma$'s at their extremes, so they are all distinct. This gives the lower bound $\delta = \Omega(d) = \Omega(\sqrt{n})$, and the upper bound $O(\sqrt{n})$ holds for run-length grammars, so $\delta = \Theta(\sqrt{n})$. Thus $\delta = \Theta(\ell \sqrt{n})$ in this string family.\qed
\end{proof}

It is curious how this string family is so easy to describe, yet so hard to represent with any copy-paste mechanism. Intuitively, the simplicity of the sequence depends in that any factor is almost the same as the previous one, so it is arguably highly repetitive, just not via copy-paste. As we prove in Section \ref{sec:variants}, the variant $\ell_p$ is in general, pretty bad, so $\delta$ being incomparable to this weak variant (the system in the proof is prolongable and has identity coding) is even more surprising. 

\section{Incomparability of $\ell$ with other measures}\label{subsection:incomparability}

It is known that $\ell = O(g)$\cite{NU2021} (their proof applies to $\ell_d$ as well), which shows that the measure $\ell$ is always reasonable for repetitive strings. But as other reachable measures, $\ell$ has its own drawbacks. We prove that in general, it does not hold that $\ell = O(g_{rl})$, making L-systems incomparable to RLSLPs \cite{Nishimoto2016}. 

\begin{lemma}\label{thm:ell_grl}
    There exists a string family where $\ell = \Omega(g_{rl}\log n/\log\log n)$.
\end{lemma}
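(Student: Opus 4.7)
The plan is to exhibit a family of strings for which run-length grammars are exponentially more economical than L-systems, giving a multiplicative gap of exactly $\log n / \log\log n$. For each $n$, let $k = \lceil \log n \rceil$, and consider the family $\mathcal{F}_n$ of all strings of length $n$ of the form $w = \syma^{m_1}\symb\syma^{m_2}\symb\cdots\syma^{m_k}\symb$ where the $m_i$ are positive integers summing to $n-k$. First I would show that $g_{rl}(w) = O(\log n)$ for every $w \in \mathcal{F}_n$: use $k$ constant-size run-length rules $A_i \rightarrow \syma^{m_i}$ (exploiting the defining power of RLSLPs, where an arbitrarily long run costs only $O(1)$) together with the axiom $A_1 \symb A_2 \symb \cdots A_k \symb$ of length $2k$, giving total size $O(k) = O(\log n)$.

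Then I would apply an information-theoretic counting argument to lower bound $\ell$. The cardinality of $\mathcal{F}_n$ equals the number of compositions of $n-k$ into $k$ positive parts, namely $\binom{n-k-1}{k-1} = 2^{\Omega(k \log(n/k))} = 2^{\Omega(\log^2 n)}$. Since the morphism of an L-system is non-erasing (by the P in CPD0L), $|\Sigma| \leq size(\varphi) \leq \ell$, so the $O(\ell \log |\Sigma| + \log n)$-bit encoding mentioned in the paper simplifies to $O(\ell \log \ell + \log n)$ bits. Consequently, at most $2^{O(L \log L + \log n)}$ distinct strings are representable by L-systems of size at most $L$. For this count to accommodate $|\mathcal{F}_n|$, we need $L \log L + \log n = \Omega(\log^2 n)$, which solves to $L = \Omega(\log^2 n / \log\log n)$. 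Hence there exists some $w \in \mathcal{F}_n$ with $\ell(w) = \Omega(\log^2 n / \log\log n) = \Omega(g_{rl}(w) \log n / \log\log n)$.

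The main obstacle is making the logarithmic factors balance precisely. The counting bound loses a $\log\log n$ factor when inverting $L \log L$, and the $\log n$ term arising from the parameters $d$ and $n$ in the L-system's encoding must be subdominant with respect to $\log|\mathcal{F}_n|$. The choice $k = \Theta(\log n)$ is what makes the two sides align at exactly the desired $\log n / \log\log n$ gap; a noticeably smaller or larger $k$ would yield a weaker separation, since $|\mathcal{F}_n|$ and $g_{rl}$ both depend monotonically on $k$. No further obstacle is expected, as the lemma only asserts the existence of one string achieving the gap, so the nonconstructive counting conclusion suffices.
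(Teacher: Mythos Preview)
Your proof is correct and follows the same information-theoretic skeleton as the paper: exhibit a family of $2^{\Omega(\log^2 n)}$ length-$n$ strings each with $g_{rl} = O(\log n)$, then use the $O(\ell\log\ell + \log n)$-bit encoding of L-systems to force $\ell = \Omega(\log^2 n/\log\log n)$ on some member of the family. The only real difference is the choice of family. The paper invokes the Kociumaka--Navarro--Prezza family (perturbed prefixes of the characteristic sequence of powers of~$2$) and cites their $\Omega(\log^2 n)$-bit incompressibility result as a black box, whereas you build your own family from compositions of $n-k$ into $k=\Theta(\log n)$ positive parts and carry out the counting explicitly. Your route is more self-contained and arguably more transparent; the paper's is shorter because the counting is offloaded to the citation. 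Both yield exactly the same $\log n/\log\log n$ gap for the same reason.
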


\begin{proof}
    Kociumaka et al. showed a string family needing $\Omega(\log^2 n)$ bits to be represented with any method \cite{KNP20}. This family is composed of all the strings that can be constructed by extracting a prefix of the characteristic sequence of the powers of 2, and then for every $k$-th symbol $\one$ in this prefix, moving it forward up to $2^{k-1}$ positions. Strings constructed in this form have $O(\log n)$ runs of \zero's separated by \one's, so $g_{rl} = \Theta(\log n)$ in this family. The minimal L-system for a string in this family can be represented with $O(\ell \log |\Sigma| + \log n) \subseteq O(\ell\log\ell + \log n)$ bits, and this must be $\Omega(\log^2 n)$, so it follows that $\ell = \Omega(\log^2 n/\log \log n)$. Thus, $\ell = \Omega(g_{rl}\log n/\log\log n)$ in this string family.\qed
\end{proof}

The same result holds for LZ parsings \cite{LZ76}. Even the LZ-End parsing \cite{Kreft2010} (the biggest of them) can be asymptotically smaller than $\ell$ in some string families.

\begin{lemma}
    There exists a string family where $\ell = \Omega(z_e\log n/\log\log n)$.
\end{lemma}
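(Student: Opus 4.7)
The plan is to reuse the Kociumaka et al.\ string family from the proof of Lemma~\ref{thm:ell_grl} and to establish, on top of the already proven $\ell = \Omega(\log^2 n/\log\log n)$, the upper bound $z_e = O(\log n)$ for its strings. Together these would give $z_e\log n/\log\log n = O(\log^2 n/\log\log n)$, which combined with the $\ell$ lower bound yields $\ell = \Omega(z_e\log n/\log\log n)$, the claimed asymptotic separation.

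The $\ell$ lower bound is inherited from Lemma~\ref{thm:ell_grl} without change: any representation of a string in this family uses $\Omega(\log^2 n)$ bits, while an L-system of size $\ell$ fits in $O(\ell\log|\Sigma|+\log n) \subseteq O(\ell\log\ell+\log n)$ bits, forcing $\ell = \Omega(\log^2 n/\log\log n)$.

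The new ingredient is $z_e = O(\log n)$. A string in the family is a sequence of $O(\log n)$ isolated $\one$'s sitting in an otherwise all-$\zero$ background, hence $O(\log n)$ maximal zero-runs of geometrically varying lengths. A single run $\zero^m$ is parsed by greedy LZ-End in $O(\log m)$ phrases via the standard doubling scheme (phrases of length $1,1,2,4,\dots$, each copying from a prefix that ends at the previous phrase boundary), but summing over all runs naively yields only $O(\log^2 n)$, which is not tight enough. The plan is to refine the accounting by showing that the phrase boundaries created while parsing the earlier runs are dense enough that every later maximal zero-run can be absorbed by $O(1)$ additional greedy phrases, and each embedded $\one$ likewise contributes only $O(1)$ phrase; thus the total cost is dominated by the $O(\log n)$ phrases spent parsing the single largest run.

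The main obstacle is this phrase-boundary bookkeeping for the \emph{greedy} LZ-End rule, which depends on where precisely the $\one$'s have been shifted within the Kociumaka family: the greedy rule extends each phrase as far as possible with an occurrence ending at an earlier boundary, and the positions of the embedded $\one$'s determine which matches are actually available. Once a careful case analysis confirms the $O(1)$-phrases-per-later-run accounting, the lemma follows by combining the two bounds as above.
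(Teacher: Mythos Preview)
Your approach diverges from the paper's in a way worth highlighting. The paper does \emph{not} work with the Kociumaka et al.\ family directly; instead it \emph{prepends} $\zero^n$ to every length-$n$ string in that family. This tiny modification kills the phrase-boundary bookkeeping problem outright: the prefix $\zero^n\one$ costs $\Theta(\log n)$ LZ-End phrases, and thereafter every remaining block $\zero^{k}\one$ (with $k<n$) occurs verbatim as a suffix of $\zero^n\one$, hence has a source ending exactly at that first phrase boundary after the initial $\one$. So each later block is a single phrase, giving $z_e=\Theta(\log n)$ with no case analysis at all. The $\Omega(\log^2 n)$-bit incompressibility survives the modification because the number of strings is unchanged and the length only doubles, so the $\ell$ lower bound carries over unchanged.

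Your plan, by contrast, stays with the unmodified family and tries to argue $z_e=O(\log n)$ there. You correctly flag that this is the hard part and then defer it (``once a careful case analysis confirms\ldots''). That deferral is the gap. The greedy LZ-End behaviour on these strings is genuinely delicate: the $i$-th zero run can be several times longer than any single earlier run, the greedy rule may absorb the trailing $\one$ into the last phrase of a run (so $\zero^{m}$ itself need not end at a boundary), and the available aligned zero-blocks depend recursively on how all previous runs were parsed. It is plausible that an $O(1)$-phrases-per-run bound still holds, but you have not shown it, and it is not the kind of claim one can wave through. The paper's prepend-$\zero^n$ trick is precisely the device that lets you avoid this entire analysis; it is the missing idea in your proposal.
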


\begin{proof}
    Take each string of length $n$ in the family of the proof of Lemma \ref{thm:ell_grl}, and preppend $\zero^n$ to it. This new family of strings still needs $\Omega(\log^2 n)$ bits to be represented with any method, because their amount is the same and $n$ just doubled. Just as before, then, $\ell = \Omega(\log^2 n/\log \log n)$ in this family. On the other hand, the LZ-End parsing needs $\Theta(\log n)$ phrases only to represent the prefix $\zero^n\one$, and then for each run of \zero's followed by \one, its source is aligned with $\zero^n\one$, so $z_e = \Theta(\log n)$. Thus, $\ell = \Omega(z_e\log n/\log\log n)$.\qed
\end{proof}

The same result also holds for the number of runs in the Burrows-Wheeler transform \cite{BW94} of a string.

\begin{lemma}
    There exists a string family where $\ell = \Omega(r\log n/\log\log n)$.
\end{lemma}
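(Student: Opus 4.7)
The plan is to reuse exactly the Kociumaka et al.\ string family from the proof of Lemma~\ref{thm:ell_grl}, without any modification. That family consists of binary strings of length $n$ obtained by shifting the $O(\log n)$ ones of a prefix of the characteristic sequence of powers of $2$, so every string in it has at most $O(\log n)$ occurrences of $\one$. The first step is to transfer the information-theoretic argument verbatim: these strings still need $\Omega(\log^2 n)$ bits to be encoded by any method, and the minimal L-system can be represented in $O(\ell\log|\Sigma|+\log n) \subseteq O(\ell\log\ell+\log n)$ bits, so $\ell = \Omega(\log^2 n/\log\log n)$.

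The second step bounds $r$ by an elementary observation that avoids any detailed reasoning about suffix sorting. The Burrows--Wheeler transform is a permutation of the characters of the input, so for any $w$ in this family, $\mathrm{BWT}(w)$ is a binary string containing exactly the same number of $\one$'s as $w$, namely $O(\log n)$, distributed among $\Theta(n)$ zeros. Any binary string with $k$ ones has at most $2k+1$ maximal same-character runs, so $r = O(\log n)$ on this family. Chaining the two bounds yields $\ell = \Omega(\log^2 n /\log\log n) = \Omega(r\log n /\log\log n)$, as claimed.

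I expect no genuine obstacle. The only subtle point is that, in contrast to the preceding lemma where prepending $\zero^n$ was essential to keep $z_e$ small (LZ-End needs a long prior reference block it can align to), the bound on $r$ here depends purely on the count of $\one$'s, which is already $O(\log n)$ in the unmodified family. Thus no extra construction step is required, and the proof should reduce to a one-line counting argument glued onto the lower bound inherited from Lemma~\ref{thm:ell_grl}.
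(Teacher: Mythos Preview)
Your proposal is correct and essentially identical to the paper's proof: both reuse the unmodified Kociumaka et al.\ family from Lemma~\ref{thm:ell_grl}, inherit the bound $\ell = \Omega(\log^2 n/\log\log n)$ from there, and observe that the BWT, being a permutation, still contains only $O(\log n)$ ones and hence $O(\log n)$ runs. Your justification for $r = O(\log n)$ is spelled out a bit more explicitly than the paper's, but the argument is the same.
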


\begin{proof}
   Consider the family of the proof of Lemma \ref{thm:ell_grl}. Clearly $r = \Theta(\log n)$, because $r$ is reachable, and the BWT also has at most $O(\log n)$ runs of \zero's separated by \one's. Thus, $\ell = \Omega(r\log n/\log\log n)$ in this string family.\qed
\end{proof}

We conclude that the measure $\ell$ is incomparable to almost every other repetitiveness measure. We summarize this in the following theorem (see \cite{Navarro2021} for the measures not explained in this paper).

\begin{theorem}
    The measure $\ell$ is incomparable with the repetitiveness measures $\delta, \gamma, b, v, c, g_{rl}, z, z_{no}, z_e$ and $r$. On the other hand, it holds that $\ell = O(g)$ and $\ell = \Omega(\nu)$ \cite{NU2021}.
\end{theorem}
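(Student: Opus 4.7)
The plan is to package this theorem as a corollary of the three lemmas of this section together with the standard relations among classical repetitiveness measures collected in \cite{Navarro2021}. Incomparability of $\ell$ with a measure $X$ requires both a family on which $\ell = o(X)$ and one on which $X = o(\ell)$, and I would cover all ten listed measures with the same two families.

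For the direction $\ell = o(X)$, I would reuse the family of Navarro and Urbina \cite{NU2021} on which $\ell = o(\delta)$. Since $\delta$ asymptotically lower-bounds every measure in the list (as documented in \cite{Navarro2021}), on that same family $\ell/X \leq O(\ell/\delta) \to 0$, and hence $\ell = o(X)$ for each $X \in \{\delta, \gamma, b, v, c, g_{rl}, z, z_{no}, z_e, r\}$ simultaneously. Thus this half of the argument reduces entirely to the already-proved gap against $\delta$.

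For the reverse direction $X = o(\ell)$, I would work on the Kociumaka et al.\ family underlying Lemma~\ref{thm:ell_grl} (using its $\zero^n$-prepended variant for $z_e$). The three lemmas of this section already cover $X \in \{g_{rl}, z_e, r\}$ directly. For the remaining measures $b, \gamma, v, c, z, z_{no}$, I would invoke their standard upper bounds by $g_{rl}$ (up to constants), as summarized in the hierarchy of \cite{Navarro2021}: each of them is $O(g_{rl}) = O(\log n)$ on this family, while Lemma~\ref{thm:ell_grl} gives $\ell = \Omega(\log^2 n / \log \log n)$. Hence $X = o(\ell)$ for all six at once.

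The two non-incomparability claims $\ell = O(g)$ and $\ell = \Omega(\nu)$ are quoted verbatim from \cite{NU2021}. The main obstacle is not mathematical but bibliographic: one must check, measure by measure, both that $\delta$ lower-bounds $X$ and that $X = O(g_{rl})$ holds on arbitrary strings, so that the two reductions pass uniformly through $\delta$ on one side and $g_{rl}$ on the other. For most entries in the list these relations are entirely standard; for $v$ and $c$ the references must be tracked down in \cite{Navarro2021} with more care, but no new constructions are required beyond those already developed in this section.
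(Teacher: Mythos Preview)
Your overall packaging matches the paper's: the theorem is stated there as a summary of the three preceding lemmas together with known inequalities among measures (the paper gives no separate proof). The first direction, routing everything through $\delta$, is fine.

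The second direction has a factual slip. You claim that $b,\gamma,v,c,z,z_{no}$ are all $O(g_{rl})$ as universal relations and then specialize to the Kociumaka family. That is false for $z_{no}$: on $\syma^n$ one has $g_{rl}=O(1)$ while $z_{no}=\Theta(\log n)$, so $z_{no}=O(g_{rl})$ does not hold in general. For $v$ there is likewise no standard bound $v=O(g_{rl})$ in \cite{Navarro2021}; the chain that \emph{is} available is $v=O(r)$. The conclusion you want still holds on the Kociumaka family, but the justification must change: bound $v$ via $v\le r=O(\log n)$ (Lemma on $r$), and bound $z_{no}$ (and $z$) directly on this family by observing that the runs of $\zero$'s grow at most geometrically, so each new run is covered by $O(1)$ non-overlapping phrases, giving $z\le z_{no}=O(\log n)$. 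Also, you dropped $\delta$ from your ``remaining measures'' list; it follows from $\delta\le g_{rl}$. With these corrections the argument goes through and coincides with what the paper intends.
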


\section{NU-systems and the measure  $\nu$}

A {\em NU-system} is a tuple $N = (V, R, \tau, s, d, n)$ that generates a unique string in a similar way to an L-system. The key difference is that in the right-hand side of its rules, a NU-system is permitted to have special symbols of the form $a(k)[i:j]$, whose meaning is to generate the $k$-th level from $a$, then extract the substring starting at position $i$ and ending at position $j$, and finally apply the coding to the resulting substring. The indexes inside a NU-system (e.g., levels, intervals) must be of size less or equal to $n$ to fit in an $O(\log n)$ bits word. Also, the NU-system must not produce any loops when extracting a prefix from some level, which is decidable to detect. The size of a NU-system is defined analogously to the size of L-systems, with the extraction symbols $a(k)[i:j]$ being symbols of length $4$. The measure $\nu$ is defined as the size of the smallest NU-system generating a string $w$, and it holds that $\nu = O(\ell)$ and $\nu = O(b)$ \cite{NU2021}. There exist families where both asymptotic bounds are strict.

We now show that NU-systems exploit the features of L-systems and macro-schemes in a way that, for some string families, can reach sizes that are unreachable for both L-systems and macro-schemes independently.

\begin{theorem}\label{thm:nu}
    There exists a family of strings where $\nu = o(\min(\ell, b))$.
\end{theorem}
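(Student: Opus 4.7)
The plan is to construct $w_n$ as a concatenation $w_n = s_D \cdot P_n$ of two pieces that are individually hard for one of the two measures but easy for the other, so that neither an L-system alone nor a BMS alone can compress $w_n$ well, yet an NU-system can by handling each piece with the appropriate mechanism. The first piece $s_D$ is the string from Lemma~\ref{thm:ellp<delta}, satisfying $\ell(s_D) = O(1)$ and, via $b \geq \delta$, also $b(s_D) = \Omega(\sqrt{|s_D|})$. The second piece $P_n$ comes from the family of Lemma~\ref{thm:ell_grl}, which satisfies $\ell(P_n) = \Omega(\log^2 n / \log \log n)$ but $g_{rl}(P_n) = O(\log n)$ and hence $b(P_n) = O(\log n)$. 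I will tune parameters so that $|s_D| = |P_n| = \Theta(n)$.

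For the upper bound on $\nu$, I would exhibit an NU-system for $w_n$ of size $O(\log n)$ by combining an NU-system for $s_D$ with one for $P_n$ via a single constant-size axiom rule: take the union of the two rule sets over disjoint alphabets and introduce an axiom rule of the form $S \rightarrow A(d_1)[1:|s_D|]\, B(d_2)[1:|P_n|]$, where $A$ and $B$ expand to $s_D$ and $P_n$ at their respective levels via the NU-system extraction operator. This gives $\nu(w_n) \leq \nu(s_D) + \nu(P_n) + O(1)$, and the known bounds $\nu = O(\ell)$ and $\nu = O(b)$ yield $\nu(s_D) = O(1)$ and $\nu(P_n) = O(b(P_n)) = O(\log n)$.

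For the lower bounds, $b(w_n) \geq \delta(w_n) \geq \delta(s_D) = \Omega(\sqrt{n})$ follows from monotonicity of substring complexity under substring containment (every substring of $s_D$ also appears in $w_n$). For $\ell(w_n) = \Omega(\log^2 n / \log \log n)$, I would adapt the counting argument of Lemma~\ref{thm:ell_grl} to the shifted family $\{s_D \cdot P : P \in \mathcal{F}\}$, which retains cardinality $2^{\Omega(\log^2 n)}$ because prepending the fixed prefix $s_D$ is injective; the same $O(\ell \log \ell + \log n)$ bit-representation bound for L-systems then forces $\ell \log \ell = \Omega(\log^2 n)$. Combining these, $\nu(w_n) = O(\log n)$ while $b(w_n) = \Omega(\sqrt{n})$ and $\ell(w_n) = \Omega(\log^2 n / \log \log n)$, so $\nu = o(\min(\ell, b))$.

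The main technical obstacle I anticipate is verifying the subadditivity of $\nu$ under concatenation within the formal NU-system definition: one must check that merging the two sub-systems over disjoint alphabets respects the no-loop condition on extraction symbols, that all indices fit within the permitted $O(\log n)$-bit words, and that the size accounting (extraction symbols contributing $4$ each) adds only an $O(1)$ overhead. A minor secondary point is that $\nu(P_n) = O(\log n)$ uses $\nu = O(b) = O(g_{rl})$, the last inequality being standard but worth making explicit.
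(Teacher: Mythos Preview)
Your proposal is correct and follows essentially the same approach as the paper: concatenate a string from the Kociumaka et al.\ family (hard for $\ell$, easy for macro-schemes) with a prefix of the fixed point from Lemma~\ref{thm:ellp<delta} (hard for $b$ via $\delta$, easy for L-systems), then bound $\ell$ by the counting argument, $b$ by $\delta$-monotonicity, and $\nu$ by handling each piece with the appropriate mechanism inside a single NU-system. The only cosmetic differences are that the paper reverses the concatenation order and, rather than invoking subadditivity abstractly via $\nu \le b \le g_{rl}$ on the Kociumaka piece, writes out the NU-system explicitly by placing the extraction symbols for both pieces directly in the axiom.
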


\begin{proof}
    Let $\mathcal{F}$ be the family of strings defined by Kociumaka et al., needing $\Omega(\log^2 n)$ bits to be represented with any method \cite{KNP20}. We construct a new family $\mathcal{F}' = \{x \cdot \textbf{y}[:n]\,|\, x \in \mathcal{F} \land |x| = n\}$, where \textbf{y} is the infinite fixed point generated by the \symc-prolongable L-system with identity coding utilized in Lemma \ref{thm:ellp<delta}.

    It still holds that $\ell = \Omega(\log^2 n / \log \log n)$ in this family. On the other hand, $b = \Omega(\sqrt{n})$, because $\delta = \Omega(\sqrt{n})$ on prefixes of \textbf{y}, and the alphabets between the prefix in $\mathcal{F}$ and $\textbf{y}[:n]$ are disjoint.
    
    Let $x$ be a string in $\mathcal{F}$ of length $n$, with $k$ symbol $\one$'s. 
    Let $i_{j}$ be the number of \zero's between the $(j-1)$-th symbol $\one$, and the $j$-th symbol \one, for $j \in [2, k$], in $x$. Also, let $i_1$ and $i_{k+1}$ be the number of \zero's at the left and right extremes of $x$.
    We then construct a NU-system, where $\tau$ is the identity coding, $d = 1$, and the prefix length is $2n$:
    \begin{align*}
    V &= \{\zero, \one, \syma, \symb, \symc\}\\
    R &= \{\zero \rightarrow \zero\zero, \one \rightarrow \one, \syma \rightarrow \syma, \symb \rightarrow \syma\symb,\symc \rightarrow \symc\symb\} \\
    S &= \zero(n)[:i_1]\one\zero(n)[:i_2]\one \dots \zero(n)[:i_k] \one \zero(n)[:i_{k+1}]\symc(n)[:n]
    \end{align*}
    
    By construction, this NU-system generates the string $x\cdot\textbf{y}[:n]$ of length $2n$, and has size $4(k+2) + k + 8$. Thus, $\nu$ is $O(\log n)$ for these strings, and $\nu = o(\min(\ell, b))$ in $\mathcal{F}'$. \qed    
\end{proof}

NU-systems can then be smaller representations than those produced by any other compression method exploiting repetitiveness. On the other hand, though computable, no efficient decompression scheme has been devised for them.

\section{$\ell$-variants are weaker than $\ell$}\label{sec:variants}

We start this section by showing that $\ell$ can be asymptotically strictly smaller than $\ell_m$, that is, restricting L-systems to be prolongable has a negative impact in its compression ability.

\begin{lemma} There exists a string family where $\ell = o(\ell_m)$.
\end{lemma}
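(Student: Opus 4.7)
The plan is to take the family $w_n = \syma^n\symb$ and argue that the general measure $\ell$ stays constant while the prolongable variant $\ell_m$ must grow. For the upper bound I would exhibit the explicit L-system $(\{\syma,\symb\},\{\syma\to\syma\syma,\symb\to\symb\},\mathrm{id},\syma\symb,\lceil\log_2 n\rceil, n+1)$: after $\lceil\log_2 n\rceil$ iterations one gets $\syma^{2^{\lceil\log_2 n\rceil}}\symb$, whose prefix of length $n+1$ is exactly $w_n$. This system has constant size, so $\ell(w_n)=O(1)$.

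For the lower bound, I would fix an arbitrary prolongable L-system $(\Sigma,\varphi,\tau,a,d,n+1)$ of size $s=\ell_m(w_n)$ generating $w_n$. Since $w_n[1]=\syma$ we must have $\tau(a)=\syma$, and prolongability forces $\varphi(a)=aX$ with $X\neq\varepsilon$. Unfolding the iteration gives the characteristic self-similar concatenation
\[
\varphi^d(a)\;=\;a\cdot X\cdot\varphi(X)\cdot\varphi^2(X)\cdots\varphi^{d-1}(X).
\]
Because $w_n$ contains a unique $\symb$ at position $n+1$, the first symbol of $\varphi^d(a)$ lying in $\tau^{-1}(\symb)$ must be at position $n+1$ exactly.

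Next I would bound that position from above in terms of $s$. Let $k$ be the least integer with $\varphi^k(X)\cap\tau^{-1}(\symb)\neq\emptyset$; such a $k$ must exist since otherwise no $\symb$ would ever be produced. A BFS argument in the dependency graph on $\Sigma$ (with an edge $\sigma\to\sigma'$ whenever $\sigma'$ occurs in $\varphi(\sigma)$) shows $k\le|\Sigma|-1\le s-1$: chase any shortest path from a symbol of $X$ to a $\tau^{-1}(\symb)$-symbol, and its $i$-th vertex lies in $\varphi^i(X)$ by induction. Letting $c=\mathrm{width}(\varphi)$, the first $\tau^{-1}(\symb)$-position in $\varphi^d(a)$ is at most
\[
1+\sum_{i=0}^{k}|\varphi^i(X)|\;\le\;1+(k+1)\,|X|\,c^{k},
\]
and since $|X|\le s$, $c\le s$, $k\le s-1$, this is bounded by $s^{O(s)}$. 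Equating with $n+1$ yields $s=\ell_m(w_n)=\Omega(\log n/\log\log n)$, so $\ell_m$ grows with $n$.

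The main obstacle is the lower bound step: one has to exploit the rigid concatenation structure $aX\varphi(X)\cdots\varphi^{d-1}(X)$ forced by prolongability, which is very different from the free axiom available to a general L-system. Once this is carefully analyzed via reachability in the dependency graph, combining with $\ell(w_n)=O(1)$ yields $\ell(w_n)=o(\ell_m(w_n))$, and the family $\{w_n\}$ witnesses the separation.
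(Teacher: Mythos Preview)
Your approach is the paper's: the same family $\syma^n\symb$ and the same reachability argument for the lower bound (a symbol in $\tau^{-1}(\symb)$, if reachable at all, must appear within $|\Sigma|$ iterations, hence at a position bounded by something like $s^{O(s)}$). You even extract the quantitative bound $\ell_m=\Omega(\log n/\log\log n)$, which the paper postpones to the next lemma.

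There is, however, a genuine slip in your upper-bound construction. With $\varphi(\syma)=\syma\syma$, $\varphi(\symb)=\symb$ and axiom $\syma\symb$ you obtain $\varphi^d(\syma\symb)=\syma^{2^d}\symb$; for $d=\lceil\log_2 n\rceil$ and $n$ not a power of $2$, the prefix of length $n+1$ is $\syma^{n+1}$, not $\syma^n\symb$. The fix is immediate: either restrict the family to $n=2^m$ (your lower-bound argument is unaffected), or use the paper's system with rules $\syma\to\syma$, $\symb\to\syma\symb$ and axiom $\symb$, for which $\varphi^n(\symb)=\syma^n\symb$ on the nose. With that correction the proof stands.
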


\begin{proof}Let $\mathcal{F} = \{\zero^n\one\, |\, n \geq 0\}$. It is clear that $\ell$ is constant in this family: the L-system $(\{\zero,\one\}, \{\zero \rightarrow \zero, \one \rightarrow \zero\one\}, id, \one, n, n)$ produces each string in $\mathcal{F}$ with the corresponding value of $n$.
    
For the sake of contradiction, suppose that $\ell_m = O(1)$ in $\f$. Let $L_n = (\Sigma_n, \varphi_n, \tau_n, a, d_n, n)$ be the the smallest $a$-prolongable morphism generating $\zero^n\one$. Because $\ell_m = O(1)$, there exists a constant $C$ satisfying that $|\Sigma_n| < C$  and $width(\varphi_n) < C$ for every $n$. Observe that it is only necessary to have one symbol $b$ with $\tau_n(b) = \one$, so w.l.o.g. assume that $\tau_n(\one) = \one$. As the system is $a$-prolongable, each level is a prefix of the next one. This implies that the morphism should be iterated until $\one$ appears for the first time, and then we can extract the prefix. This must happen in the first $C$ iterations of the morphism, otherwise $\one$ is not reachable from $a$ (i.e., if an iteration does not yield a new symbol, then no new symbols will appear since then, and there are at most $C$ symbols). But in the first $C$ iterations we cannot produce a string longer than the constant $C^C$, and there exists a finite number of strings of length less than $C^C$. For sufficiently large $n$, this implies that the symbol $\one$, if it is reachable, will appear for the first time before the $(n+1)$-th symbol, which is a contradiction.\qed
\end{proof}

Clearly, it also holds that $\ell_d = o(\ell_m$) in this family. Similarly, it is not difficult to see that $\ell_u$ is constant in the family $\{\zero^{2^n}1\, |\, n \geq 0\}$ (e.g., axiom $s = \texttt{01}$ and rules $\zero \rightarrow \zero\zero$, $\zero \rightarrow \one\one$). A similar argument yields that $\ell_u = o(\ell_m)$ for this other string family. 

Further, we can find a concrete asymptotic gap between $\ell$ and $\ell_m$ in the string family of the proof of the previous lemma.

\begin{lemma}
There exists a string family  where $\ell_m = \Omega(\ell \log n /\log \log n)$.
\end{lemma}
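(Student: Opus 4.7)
The plan is to reuse the family $\mathcal{F}=\{\zero^n\one \mid n\ge 0\}$ from the preceding lemma, where $\ell=O(1)$ is witnessed by the explicit L-system exhibited there. What I want is a quantitative refinement of the contradiction argument used in that proof: instead of assuming $\ell_m = O(1)$, I set $k := \ell_m$ for a given $n$ and derive an explicit inequality relating $k$ to $n$. Since $|\Sigma_n|$, $size(\varphi_n)$, and in particular $width(\varphi_n)$ are all at most $k$, iterating $\varphi_n$ produces strings of length at most $k^j$ at level $j$.

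The heart of the argument consists of two monotonicity observations about a smallest $a$-prolongable L-system $L_n=(\Sigma_n,\varphi_n,\tau_n,a,d_n,n)$ generating $\zero^n\one$, both exploiting the fact that $\varphi_n^j(a)$ is a prefix of $\varphi_n^{j+1}(a)$ for every $j$. First, the set $A_j$ of symbols occurring in $\varphi_n^j(a)$ is non-decreasing in $j$, and once $A_{j+1}=A_j$ the sequence is stable forever; since $|A_j|\le|\Sigma_n|\le k$, stabilization occurs at some level $j_0\le k-1$. Second, if $p_j$ denotes the position of the first symbol of $\varphi_n^j(a)$ whose image under $\tau_n$ is $\one$ (with $p_j=\infty$ when no such symbol occurs), then $p_j$ is non-increasing in $j$, because all positions strictly preceding $p_j$ are copied verbatim into $\varphi_n^{j+1}(a)$.

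Since the system outputs $\one$ at position $n+1$ at level $d_n$, some symbol coding to $\one$ must appear at some level, and the smallest such level $j^*$ satisfies $j^*\le j_0\le k$. Combining the two facts yields $n+1 = p_{d_n} \le p_{j^*} \le |\varphi_n^{j^*}(a)| \le width(\varphi_n)^{j^*} \le k^k$. Taking logarithms gives $k\log k \ge \log(n+1)$, from which a routine estimate forces $k = \Omega(\log n / \log\log n)$. Combined with $\ell = O(1)$ on $\mathcal{F}$, this is exactly $\ell_m = \Omega(\ell \log n / \log\log n)$.

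The main subtlety is the interplay between the alphabet-growth bound (which caps \emph{when} a symbol coding to $\one$ can first appear) and the position monotonicity (which forces such a symbol to appear \emph{far to the right} once it does). Both rely crucially on prolongability through the prefix-preservation property $\varphi_n^j(a)\sqsubseteq\varphi_n^{j+1}(a)$; this is precisely why the analogous statement fails for unrestricted $\ell$, where one can afford to ``place'' the $\one$ cheaply via a non-prolongable axiom and coding as in the previous lemma.
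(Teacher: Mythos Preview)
Your proof is correct and follows essentially the same approach as the paper's: both exploit prolongability to argue that a symbol coding to $\one$ must first appear within $|\Sigma|$ iterations, at a position that persists and hence must be at least $n+1$, forcing the width--alphabet tradeoff $n+1 \le (\text{width})^{|\Sigma|}$. The only cosmetic difference is that you collapse both $|\Sigma|$ and $width(\varphi)$ into the single bound $k=\ell_m$ and solve $k^k\ge n+1$ directly, whereas the paper keeps them separate as $k$ and $t$, obtains $\ell_m\ge\max(t,\log_t n)$, and minimizes via the Lambert $W$ function---arriving at the same $\Omega(\log n/\log\log n)$.
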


\begin{proof} Let $\mathcal{F} = \{\zero^n\one\, |\, n \geq 0\}$. Recall that $\ell = O(1)$ in this family. Let $k= |\Sigma|$ and $t = width(\varphi)$ obtained from the morphism of the smallest $a$-prolongable system generating $\zero^n\one$ (we assume again that the only symbol mapped to $\one$ by the coding is \one). In the first $k$ iterations, $\one$ must appear (as in the previous proof) and cannot be deleted in the following levels, so it cannot appear before position $n+1$. Hence, $t^k > n$, which implies $k > \log_t n$. By definition, $\ell_m \geq k \geq \log_t n$ and $\ell_m \geq t$, so $\ell_m \geq \max(t, \log_t n)$. The solution to the equation $t=\log_t n$ is the smallest value that $\max(t, \log_t n)$ can take for $t \in [2..n]$. This value is $\Omega(\log n/W(\log n))$ where $W(x)$ is the Lambert W function, and it holds that $W(\log n) = \Theta(\log \log n)$. Therefore, $\ell_m = \Omega(\ell \log n /\log \log n)$ in this string family. \qed
\end{proof}

We now show that, if we remove the coding from prolongable L-systems, we end with a much worse measure. 

\begin{lemma}
 There exists a string family where $\ell_p = \Omega(\ell_m \sqrt{n})$.
\end{lemma}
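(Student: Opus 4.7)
The plan is to use the family $\mathcal{F}=\{w_d\}_{d\ge 1}$ where $w_d=\tau(s_d)$ is the image of the Lemma~\ref{thm:ellp<delta} family under the coding $\tau$ that sends \symc to \syma and fixes \syma, \symb. Concretely, $w_d=\syma\syma\symb\,\syma\syma\symb\,\syma^3\symb\,\syma^4\symb\cdots\syma^d\symb$ is a string over the binary alphabet $\{\syma,\symb\}$ of length $n=\Theta(d^2)$ whose maximal \syma-runs take $\Theta(d)$ pairwise distinct values $\{2,3,\dots,d\}$.

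For the upper bound $\ell_m(w_d)=O(1)$, I would simply reuse the \symc-prolongable L-system of Lemma~\ref{thm:ellp<delta} and replace its identity coding by $\tau$. The resulting L-system is still \symc-prolongable, has constant description size, and generates exactly $\tau(s_d)=w_d$.

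For the lower bound $\ell_p(w_d)=\Omega(d)=\Omega(\sqrt n)$, I would take any \syma-prolongable L-system $(\Sigma,\varphi,\mathrm{id},\syma,d',n)$ of size $k$ generating $w_d$. Because the coding is the identity and $w_d$ starts with \syma over $\{\syma,\symb\}$, the prolongable symbol must be \syma and one may assume $\Sigma=\{\syma,\symb\}$. Writing $\varphi(\syma)=\syma x$ and using $\varphi^{j+1}(\syma)=\varphi^j(\syma)\cdot\varphi^j(x)$, the iterates form nested prefixes of $w_d$. The positions of the $d$ occurrences of \symb in $w_d$ follow a quadratic law whose consecutive gaps take the $\Theta(d)$ distinct values $3,4,\dots,d+1$, while the length sequence $|\varphi^j(\syma)|$ of a binary prolongable morphism is governed by a $2\times 2$ integer growth matrix with entries bounded by $k$ and is thus either eventually geometric or eventually linear with a single rational slope. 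The argument is then to show that such a sequence cannot realise $\Theta(d)$ distinct increments unless $k=\Omega(d)=\Omega(\sqrt n)$, concluding the bound.

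The main obstacle will be turning this intuition into a formal statement, because binary morphisms can create nontrivial internal structure under iteration (Fibonacci-like behaviour already appears with $\varphi(\syma)=\syma\symb,\varphi(\symb)=\syma$), so the argument cannot merely rely on $|\varphi^j(\syma)|$ hitting the \symb-positions of $w_d$; new \symb-positions can surface inside the blocks $\varphi^j(x)$ rather than only at their boundaries. The cleanest route I see is an inductive analysis of \symb-positions in each iterate, classifying them as either (i) the image of a \symb already present in a previous level, whose offset within its surrounding block is determined by at most $O(k)$ choices coming from the \syma-prefix/suffix profile of $\varphi(\syma)$ and $\varphi(\symb)$, or (ii) created by an image boundary; in both cases the pool of achievable gap values grows by only $O(1)$ per level, so reaching $\Theta(d)$ distinct gaps forces $k=\Omega(d)$. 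Should the tight $O(k)$ increment bound prove delicate, a polynomial relaxation still yields a genuine asymptotic separation and can be boosted to the claimed $\sqrt n$ factor by lengthening the family along the same template.
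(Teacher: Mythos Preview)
Your proposal has a genuine gap: the lower bound $\ell_p(w_d)=\Omega(\sqrt n)$ is never established. You yourself flag that ``the main obstacle will be turning this intuition into a formal statement'', and indeed neither of the two sketches you offer works as written. The first one, based on the growth sequence $|\varphi^j(\syma)|$ being governed by a $2\times2$ matrix and hence ``eventually geometric or eventually linear with a single rational slope'', is about \emph{lengths of iterates}, not about the positions of the \symb's inside them; nothing prevents a short morphism from producing many distinct \symb-gaps while its iterate lengths grow geometrically. The second sketch (``the pool of achievable gap values grows by only $O(1)$ per level'') is closer to a valid argument, but it is not carried out, and the fallback of ``a polynomial relaxation\ldots boosted\ldots by lengthening the family'' is not a proof either. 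For what it is worth, a clean way to rescue your family is to observe that if $\varphi(\syma)$ contains at least one \symb\ (forced, since otherwise the fixed point is $\syma^\omega$), then every maximal \syma-run of $\varphi(w)$, for \emph{any} binary $w$ without consecutive \symb's, has length drawn from a fixed set of size $O(|\varphi(\syma)|+|\varphi(\symb)|)$ determined solely by the run-decompositions of $\varphi(\syma)$ and $\varphi(\symb)$; since the fixed point satisfies $\varphi(\mathbf w)=\mathbf w$ and your $w_d$ contains $\Theta(d)$ distinct \syma-run lengths, this forces $\ell_p=\Omega(d)=\Omega(\sqrt n)$.

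The paper avoids all of this by choosing a different and much simpler family, $\mathcal F=\{\zero^n\one\mid n\ge 0\}$. There the lower bound on $\ell_p$ is a one-line observation: with identity coding the axiom must be \zero, $\varphi(\zero)$ is a prefix of the output, and if $\varphi(\zero)$ contains no \one\ then \one\ is unreachable, while if it does the first \one\ must sit at position $n{+}1$; hence $|\varphi(\zero)|\ge n{+}1$ and $\ell_p=\Theta(n)$. The upper bound $\ell_m=O(\sqrt n)$ comes from an explicit four-symbol prolongable system with a coding, writing $n=k\lfloor\sqrt n\rfloor+j$ and expanding in three levels. So in the paper both inequalities are immediate, at the price of a weaker $\ell_m$ upper bound ($O(\sqrt n)$ rather than your $O(1)$); the ratio $\ell_p/\ell_m=\Omega(\sqrt n)$ is the same.
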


\begin{proof}
We prove that $\ell_p = \Theta(n)$ whereas $\ell_m = O(\sqrt{n})$ on $\mathcal{F} = \{\zero^n\one\, |\, n \geq 0\}$. Any prolongable morphism with an identity coding generating $\zero^n\one$ must have the rule $\zero \rightarrow \zero^n\one$, which implies $\ell_p = \Theta(n)$.
The reason is that if  the system is prolongable, but it has no coding, then the axiom must be \zero, and in the prolongable rule $\zero \rightarrow \zero w$, if $|\varphi(\zero)| \leq n+1$, then the non-empty string $w$ could only contain \zero's and \one's, otherwise undesired symbols would appear in the final string because the starting level is a prefix of the final level.
If $w$ does not contains \one's, then $\one$ is unreachable from \zero.
If $w$ contains a $\one$, then the first of them should be
at position $n+1$.

On the other hand, we can construct an $\syma$-prolongable morphism, with $\tau(\one) = \one$ and $\tau(a) = \zero$ for every other symbol $a \neq \one$ as follows: Let $n = k\lfloor \sqrt{n}\rfloor+j$ with $\sqrt{n} > 2, k > 1$, $j \geq 0$ ($k$ and $j$ integers), and define the following rules
\begin{align*}
\syma &\rightarrow \syma\symb \\
\symb &\rightarrow \symc^{k-1}\symd\\
\symc &\rightarrow \zero^{\sqrt{n}-1} \\
\symd &\rightarrow \zero^{\sqrt{n}-3+j}\one
\end{align*}
It holds that $\varphi^3(\syma) =\texttt{abc}^{k-1}\symd \zero^{(\sqrt{n}-1)(k-1)}\zero^{\sqrt{n}-3 + j}\one$, and that
$$|\varphi^3(\syma)| = 3 + (k-1) + (\sqrt{n}-1)(k-1) + (\sqrt{n} - 3 + j) + 1 =  n+1$$
Thus $\tau(\varphi^3(\syma)) = \zero^n\one$ as required, and the size of the system is clearly $O(\sqrt{n})$. \qed
\end{proof}

Surprisingly, this weak measure $\ell_p$ can be much smaller than $\delta$ for some string families. This can be deduced from Lemma \ref{thm:ellp<delta}. On the other hand, it does not hold that $\ell_p = O(g)$ for any string family, because $g = \Theta(\log n)$ in $\{\zero^n\one\, |\, n \geq 0\}$.

If we restrict L-systems to be expanding, that is, with all its rules of length at least 2, we also end with a weaker measure. This shows that in general, it is not possible to transform L-systems into expanding ones without incurring an increase in size.

\begin{lemma}
There exists a string family where $\ell = o(\ell_e)$.
\end{lemma}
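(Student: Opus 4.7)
The plan is to reuse the family $\mathcal{F} = \{\zero^n\one : n \geq 0\}$, which the preceding lemmas already show has $\ell = O(1)$ via the L-system $(\{\zero,\one\}, \{\zero \rightarrow \zero, \one \rightarrow \zero\one\}, id, \one, n, n+1)$. The optimal system relies crucially on the length-$1$ rule $\zero \rightarrow \zero$: iterating $\varphi$ on the axiom $\one$ yields $\zero^n\one$ of length $n+1$ for the chosen $d=n$. I aim to show that, once such rules are forbidden, $\ell_e$ becomes unbounded on $\mathcal{F}$, yielding $\ell = o(\ell_e)$ on a subfamily $\{\zero^{n_k}\one : k \geq 1\}$ obtained by picking $n_k$ so that $\ell_e(\zero^{n_k}\one) \to \infty$.

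The argument is by contradiction: suppose $\ell_e(\zero^n\one) \le C$ for every $n$. For each $n$, fix an expanding L-system $L_n = (\Sigma_n, \varphi_n, \tau_n, s_n, d_n, n+1)$ of size at most $C$. Since $|\Sigma_n|$, $|s_n|$ and $size(\varphi_n)$ are all at most $C$ and $\tau_n$ maps into the two-letter output alphabet $\{\zero,\one\}$, only finitely many quadruples $(\Sigma, \varphi, \tau, s)$ (up to symbol renaming) are possible. By pigeonhole, a single template $T = (\Sigma, \varphi, \tau, s)$ with $\varphi$ expanding must account for infinitely many values of $n$ through different choices of $d$.

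The core of the proof is to bound, for a fixed template $T$, the set $N_T := \{n : \tau(\varphi^d(s))[1{:}n{+}1] = \zero^n\one \text{ for some } d\}$ by $|N_T \cap [0,M]| = O(\log M)$. Each $d$ determines at most one $n$, namely $p(d)-1$, where $p(d)$ denotes the position of the first $\one$ in $\tau(\varphi^d(s))$. Values of $d \le L := \lceil \log_2(M{+}1) \rceil$ contribute at most $O(\log M)$ of them directly. For $d > L$, the expanding hypothesis forces $|\varphi^d(s[1])| \ge 2^d > M$, so the prefix $\tau(\varphi^d(s))[1{:}M{+}1]$ lies entirely inside $\tau(\varphi^d(s[1]))$; factoring $\varphi^d(s[1]) = \varphi^L(\varphi^{d-L}(s[1]))$ and using $|\varphi^L(b)| \ge 2^L \ge M{+}1$ identifies this prefix with $\tau(\varphi^L(b))[1{:}M{+}1]$, where $b = h^{d-L}(s[1])$ for the ``first-symbol'' map $h(a) := \varphi(a)[1]$. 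Since $h:\Sigma\to\Sigma$ with $|\Sigma| \le C$, the symbol $b$ ranges over at most $C$ values as $d$ varies over $d > L$, contributing at most $C$ more distinct $p(d)$'s. Summing over the finitely many templates, at most $O_C(\log M)$ values of $n \le M$ can lie in $\bigcup_T N_T$, which is less than $M+1$ for sufficiently large $M$, giving the required contradiction and hence unboundedness of $\ell_e$ on $\mathcal{F}$.

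The main obstacle is the analysis for $d > L$, where I need to justify that an arbitrarily deep expansion still admits only $|\Sigma|$ possible prefix shapes of length $M+1$. The expanding hypothesis is used essentially twice: first to keep the relevant prefix fully inside $\tau(\varphi^d(s[1]))$, and second to guarantee $|\varphi^L(b)| \ge M+1$ so that $\tau(\varphi^L(b))[1{:}M{+}1]$ is well-defined. Coupled with the eventual periodicity of $h^k(s[1])$ over a finite alphabet, this collapses the contribution of large $d$ to a constant number of prefixes, and the counting bound $O_C(\log M)$ per template closes the pigeonhole.
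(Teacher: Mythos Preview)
Your argument is correct and takes a genuinely different route from the paper's. The paper works with the family $\{\zero^n\one\zero^{2^n}\,:\,n\ge 0\}$, where the $\zero^{2^n}$ tail forces $d=\Omega(n)$ in any expanding system; it then locates a period in the first-symbol sequence $a_0,\dots,a_d$, extracts a power $\varphi^q$ that is prolongable on some $a_t$, and reaches a contradiction by arguing that the symbol coding to~$\one$ either surfaces within the first $C$ iterations of $\varphi^q$ (hence at a bounded position) or never, while the prefix $\varphi^{jq}(a_t)$ already has length $2^{d-t}=\omega(n)$. You instead stay with the simpler family $\{\zero^n\one\}$ and replace the structural contradiction by a counting one: for each fixed bounded template the prefix $\tau(\varphi^d(s))[1{:}M{+}1]$ is, for $d>L$, determined by $h^{d-L}(s[1])\in\Sigma$, so only $|\Sigma|$ distinct first-$\one$ positions can occur beyond the $L{+}1$ small-$d$ cases, giving $|N_T\cap[0,M]|\le L+1+C$ and hence $|\bigcup_T N_T\cap[0,M]|=O_C(\log M)<M+1$.

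Your approach has two advantages: it works directly with a general axiom $s$ via $s[1]$ (the paper silently assumes a single-symbol axiom, which is not obviously without loss of generality for expanding systems), and it yields a quantitative density bound rather than mere unboundedness. The paper's approach, on the other hand, makes the link to the $\ell_m$ argument more explicit by literally producing a prolongable power of $\varphi$; the $\zero^{2^n}$ suffix is there precisely to inflate $d$ so that the resulting prefix $\varphi^{jq}(a_t)$ overruns position $n{+}1$, a step your counting argument sidesteps. Note that the single-template pigeonhole remark in your second paragraph is unnecessary for the final contradiction and could be dropped, since the sum over all templates is what you actually use.
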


\begin{proof}
Let $\mathcal{F} = \{\zero^n\one\zero^{2^n}\, |\, n \geq 0\}$. Clearly $\ell$ is constant in $\mathcal{F}$: the L-system $(\{\zero,\one, \two\}, \{\zero \rightarrow \zero\zero, \one \rightarrow \two\one, \two \rightarrow \two, \{\zero \rightarrow \zero, \one \rightarrow \one,  \two \rightarrow \zero\}, \one\zero, n, 2^n+n+1)$ produces $\zero^n\one\zero^{2^n}$ and stays constant-size as $n$ grows.

Suppose that $\ell_e$ is also constant in $\mathcal{F}$. Then there is a constant $C$ such that the minimal expanding L-systems generating the strings in this family have at most $C$ rules, each of length at most $C$. Without loss of generality, assume that for each of these systems, the only symbol mapped to $\one$ by the coding  is $\one$, also assume that the axiom is a single symbol $a_0$. Note that because the systems are expanding with rules of size at most $C$, their level must be $d \geq \log_C 2^n = \frac{n}{\log_2 C}$. 
Let $a_0, a_1, \dots, a_d$ be the sequence of first symbols of $\varphi^i(a_0)$ for $i \leq d$. By the pigeonhole principle, for sufficiently big values of $n$, this sequence has a period of length $q$ starting from $a_p$, with $p+q \leq C \leq d$. Then there exist indexes $t$ and $j$ such that $t = d - jq$ and $p \leq t <  p + q$. By the $q$-periodicity of the sequence starting at $a_t$, it is clear that $\varphi^q(a_t) = a_tw$ for some $w \neq \varepsilon$ (because the morphism is expanding), so $\varphi^q$ is prolongable on $a_t$. This implies that $\varphi^{iq}(a_t)$ is a prefix of $\varphi^{jq}(a_t)$ for $i \leq j$. As before, if $\one$ is reachable from $a_t$ via $\varphi^q$, that must happen in the first $C$ iterations, so $\varphi^{Cq}(a_t)$ contains a $\one$ symbol, and so does $\varphi^{jq}(a_t)$, which is a prefix of $\varphi^d(a_0)$. This implies that $\varphi^d(a_0)$ contains a $\one$ before position $C^{Cq}$, which is bounded by $C^{C^2}$, a contradiction for sufficiently long strings in the family. So it has to be that $\one$ is not reachable via $\varphi^q$ from $a_t$, but this is also a contradiction for sufficiently long strings, because $\varphi^{jq}(a_t)$ is a prefix of $\varphi^d(a_0)$ of length at least $2^{d-t} = \omega(n)$, yielding too many symbols not mapped to $\one$ before the first \one at level $d$. Thus $\ell_e$ cannot be $O(1)$ in $\mathcal{F}$.
\qed
\end{proof}

\section{Conclusions and open questions}

The measure $\ell$ is arguably a strong reachable measure, which can break the limits of $\delta$ (a measure considered a stable lower bound for repetitiveness) by a wide margin (i.e., a factor of $\sqrt{n}$). On the other hand, however, $\ell$ can be asymptotically weaker than the space reached by several compressors based on run-length context-free grammars, many Lempel-Ziv variants, and the Burrows-Wheeler Transform. Only the size of context-free grammars provides an upper bound to $\ell$. This suggests that the self-similarity exploited by L-systems is mostly independent of the  source of repetitiveness exploited by other  compressors and measures, which build on copy-paste mechanisms. We also show that the definition of L-systems is robust, in the sense that several attempts to simplify or restrict them leads to weaker measures.

A relevant question about L-systems is whether they can be useful to build compressed sequence representations that support direct access. More formally, can we build an $O(\ell)$-space representation of a string $w[1:n]$ providing random access to any symbol in $O(\text{polylog}\, n)$ time? 
The closest result (as far as we know) is an algorithm designed by Shallit and Swart \cite{ShallitSwart1999}, which computes $\varphi^d(a)[i]$ in time bounded by a polynomial in $|\Sigma|, width(\varphi), \log d$ and $\log i$. It uses more space and takes more time than our aim. The main bottleneck is having to store the incidence matrix of the morphism and to calculate its powers. As suggested by Shallit and Swart, this could be solved by finding closed forms for the growth functions (recurrences) of each symbol. If this approach were taken, these formulas should be easy to describe within $O(\ell)$ space.

We leave some other conjectures about $\ell$. The first is that $\ell = \Omega(\delta/\sqrt{n})$ for any string family. The second is that it holds that $\ell = o(\ell_d)$ for some string family. Another interesting task is identifying whose of the $\ell$-variants are $O(g)$. 

In terms of improving compression, on the other hand,
the recent measure $\nu$ \cite{NU2021} aims to unify the repetitiveness induced by self-similarity and by explicit copies. This measure is the smallest size of a NU-system, a natural way to combine L-systems (with minimum size $\ell$) with macro-schemes (with minimum size $b \ge \delta$). In line with our finding that $\ell$ and $\delta$ are mostly orthogonal, we prove in this paper that $\nu$ is strictly more powerful than both $\ell$ and $b$, which makes $\nu$ the unique smallest reachable measure of repetitiveness to date. 

Because the aim for $\ell$ was also defining a practical measure, $\varepsilon$-rules were not allowed in the original definition. It would still be a computable measure if we relax that restriction. Another feature that could increase L-systems power is the use of a morphism instead of a coding (this is common the in literature on L-systems), or adding run-length rules. We leave the study of these features for the future, as we are already working on understanding $\nu$.

There are several open questions related to NU-systems and $\nu$. For example, does it hold that $\nu = \Omega(\ell \log \log n/ \log n)$, or $\nu = \Omega(\delta/\sqrt{n})$, for every string family?
Is $\nu = O(\gamma)$, or at least $o(\gamma \log (n/\gamma))$, for every string family? (recall that $\gamma$ is a measure between $\delta$ and $b$, and $o(\gamma \log (n/\gamma))$ is unknown to be reachable \cite{Kempa2018}).
And towards having a practical compressor based on $\nu$, can we decompress a NU-system efficiently?

In a more general perspective, 
this paper pushes a little further the discussion of what we understand by a repetitive string. Intuitively, repetitiveness is about copies, and macro-schemes capture those copies pretty well, but there are many other aspects in a text that could be repeating besides explicit copies, such as general patterns and relative ordering of symbols. Macro-schemes capture explicit copies, L-systems capture self-similarity, and NU-systems capture both. What other regularities could we exploit when compressing strings, keeping the representation (more or less) simple and the associated repetitiveness measure (hopefully efficiently) computable? 
%
%
%
\bibliographystyle{splncs04}
\bibliography{bibliography}

\end{document}